\documentclass[lettersize,journal]{IEEEtran}
\usepackage{amsmath,amsfonts}
\usepackage{algorithm}
\usepackage{algpseudocode}
\usepackage{array}
\usepackage[caption=false,font=normalsize,labelfont=sf,textfont=sf]{subfig}
\usepackage{textcomp}
\usepackage{stfloats}
\usepackage{url}
\usepackage{verbatim}
\usepackage{booktabs}
\usepackage{graphicx}
\usepackage{cite}
\usepackage[table]{xcolor}
\usepackage{multirow} 

\usepackage[acronym]{glossaries}
\makeglossaries
\newacronym{ce}{CE}{Capture Effect}
\newacronym[plural=APs]{ap}{AP}{Access Point}
\newacronym[plural=BSSs]{bss}{BSS}{Basic Service Set}
\newacronym{sr}{SR}{Spatial Reuse}
\newacronym{obsspd}{OBSS/PD}{Overlapping Basic Service Set/Packet Detect}
\newacronym[plural=STAs]{sta}{STA}{Station}
\newacronym{cosr}{Co-SR}{Coordinated Spatial Reuse}
\newacronym{cotdma}{Co-TDMA}{Coordinated Time Division Multiple Access}
\newacronym{mapc}{MAPC}{Multi-Access Point Coordination}
\newacronym{txop}{TXOP}{Transmission Opportunity}
\newacronym{mcs}{MCS}{Modulation and Coding Scheme}
\newacronym{sinr}{SINR}{Signal-to-Interference-plus-Noise Ratio}
\newacronym{pf}{PF}{Proportional Fairness}

\usepackage{amsthm}
\newtheorem{theorem}{Theorem}[section]

\definecolor{staticclr}{RGB}{222,222,222}
\definecolor{donorclr}{RGB}{255,214,178}
\definecolor{contclr}{RGB}{190,220,255}
\definecolor{discclr}{RGB}{200,235,200}

\begin{document}

\bstctlcite{IEEEexample:BSTcontrol}

\title{On the Allocation of Transmit Power for Coordinated Spatial Reuse in IEEE 802.11bn Multi-Access Point Coordination}

\author{Francesc~Wilhelmi,~\IEEEmembership{Member,~IEEE} and Boris Bellalta,~\IEEEmembership{~Senior Member,~IEEE}
\thanks{The authors are with Universitat Pompeu Fabra, Barcelona, Spain.}}

\maketitle

\begin{abstract}
IEEE 802.11bn (11bn) introduces Coordinated Spatial Reuse (Co-SR), a Multi-AP Coordination (MAPC) scheme in which two Access Points (APs) coordinate to control the transmit power for a simultaneous transmission. This letter introduces a Proportional Fairness (PF)-driven framework for allocating Co-SR transmit power on a per-Transmission Opportunity (TXOP) basis. We prove that any Pareto-optimal power pair keeps at least one AP at its maximum power, reducing the joint two-dimensional search to two cheap one-dimensional line searches that fit comfortably within TXOP timing, and show that this collapses to an exact closed-form expression in a real, discrete-rate system. We validate the resulting policies, together with a low-complexity selfish baseline, against a brute-force oracle and through packet-level simulations in \texttt{Kom8ndor}, an 11bn network simulator. Results show that jointly evaluating both APs' links is key to unlocking Co-SR's spatial reuse gains, that the coordinated AP's fairness-optimal power depends critically on whether the rate is modeled continuously or through real, discrete Modulation and Coding Scheme (MCS) steps, and that current 11bn signaling supports the selfish policy but not the fairness-optimal ones, which would need additional per-TXOP channel reporting.
\end{abstract}

\begin{IEEEkeywords}
IEEE 802.11bn, Wi-Fi 8, Coordinated Spatial Reuse, Multi-Access Point Coordination.
\end{IEEEkeywords}

\section{Introduction}

\IEEEPARstart{T}{he} standardization of the upcoming generation of Wi-Fi (Wi-Fi~8), the IEEE 802.11bn (11bn), is almost ready and comes along with novel features such as \gls{mapc}~\cite{wilhelmi2026tutorial}. Among other schemes, \gls{cosr} is introduced to enable simultaneous transmissions between two \glspl{bss} on the same channel, while controlling the transmit power used by each one to ensure that each receiver's \gls{sinr} remains acceptable. This is the natural successor to the \gls{obsspd}-based \gls{sr} of IEEE 802.11ax (11ax)~\cite{wilhelmi2021spatial}, but now includes explicit inter-\gls{ap} signaling to better control interference with respect to decentralized energy-detection thresholding.

While defining the procedures by which \glspl{ap} exchange information and coordinate simultaneous transmissions, 11bn leaves open the implementation of transmit power adjustment. State-of-the-art implementations assume either symmetric power allocation or restricting only the power of the coordinated \gls{ap}~\cite{wilhelmi2023throughput}, hence disregarding the involved links and interference geometry. This leads to the following questions: \emph{What is the optimal joint power configuration for the two \glspl{ap}?} \emph{Can the optimal values be computed cheaply to match \gls{txop} duration times?} And finally, \emph{Is the existing 11bn \gls{mapc} signaling enough to achieve that?}

This letter aims to answer these questions from a \gls{pf} perspective. In particular, (i) we show that the optimal transmit power pair always has at least one \gls{ap} at its power ceiling (a structural result known for the sum-rate objective~\cite{gjendemsjo2008binary} and adapted here to \gls{pf}~\cite{kelly1998rate}), (ii) we reduce the remaining one-dimensional problem to a cheap line search, (iii) we show that a real, discrete-rate target system collapses this search to an exact closed form, (iv) we validate the continuous variant numerically against a brute-force oracle (the discrete variant's optimality follows directly from monotonicity, requiring no such validation), and (v) we implement and evaluate both variants in \texttt{Kom8ndor}~\cite{wilhelmi2026kom8ndor}, a discrete-event Wi-Fi~8 simulator.

\section{System Model}
\label{sec:sysmodel}

We consider the scenario depicted in Fig.~\ref{fig:scenario}, with a \gls{cosr} pair formed of a coordinator \gls{ap} and a coordinated \gls{ap}, both transmitting simultaneously to an associated \gls{sta} during a shared \gls{txop}. The negotiation of the simultaneous transmission is performed on a per-\gls{txop} basis, as defined by 11bn \gls{mapc}~\cite{wilhelmi2026tutorial} and illustrated in Fig.~\ref{fig:signaling}.

\begin{figure}
    \centering
    \subfloat[]{\includegraphics[width=.80\columnwidth]{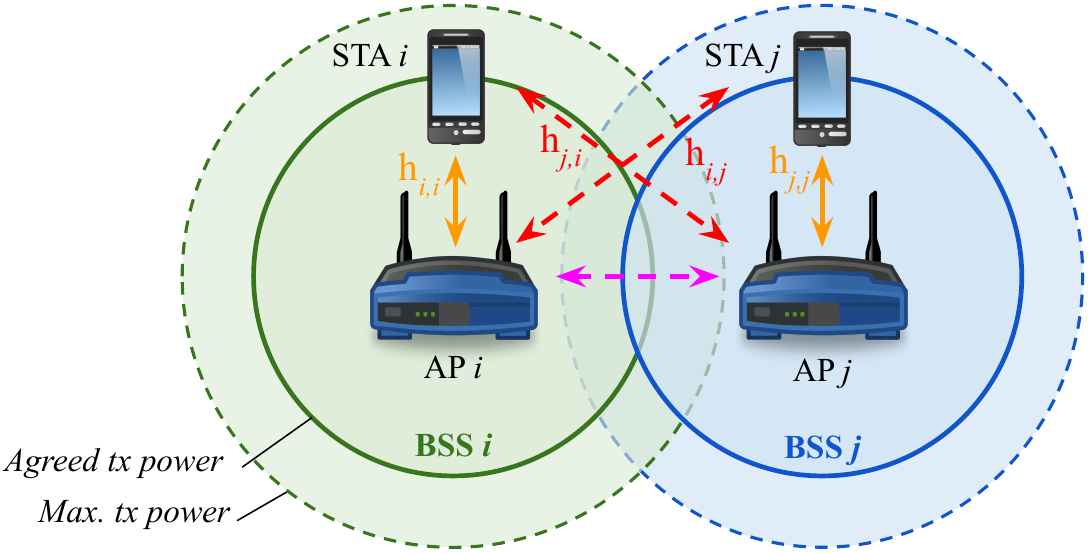}\label{fig:scenario}}    
    \hfil
    \subfloat[]{\includegraphics[width=.75\columnwidth]{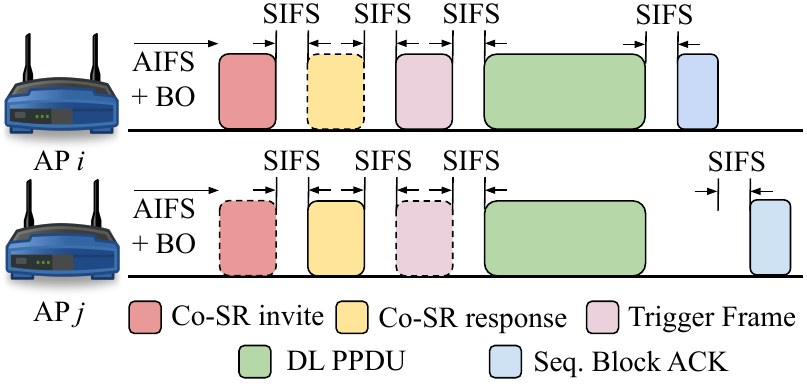}\label{fig:signaling}}
    \caption{Co-SR scenario. (a) Deployment. (b) MAPC Co-SR signaling.}
    \label{fig:example}
\end{figure}

\emph{Received power and channel gain.} Using a standard log-distance pathloss model, the power received at \gls{sta}$_i$ from $\mathrm{AP}_k\in\{i,j\}$ is
\begin{equation}
\label{eq:prx}
P_{k,i}^{\text{rx}} = P_k\, G_{\text{tx}} G_{\text{rx}} \, L(d_{k,i})^{-1},
\end{equation}
where $P_k$ is $\mathrm{AP}_k$'s transmit power, $G_{\text{tx}}$ and $G_{\text{rx}}$ are the (linear) transmit and receive antenna gains, respectively, and $L(d)$ is the path loss at distance $d$. The channel gain is 
\begin{equation}
\label{eq:chg}
    h_{k,i} = P_{k,i}^{\text{rx}}/P_k = G_{\text{tx}} G_{\text{rx}}L(d_{k,i})^{-1}.
\end{equation}

\emph{SINR and rate.} During the simultaneous transmission, from the perspective of a given \gls{bss} $i$, we treat the transmissions of the other \gls{bss} $j$ as noise, so the received \gls{sinr} at \gls{sta} $i$ is
\begin{equation}
\label{eq:sinr}
\textrm{SINR}_i(P_i, P_j) = \frac{h_{i,i} P_i}{N_0 + h_{j,i} P_j},
\end{equation}
where $N_0$ is the floor noise and $P_i,P_j\in(0,P_{\max}]$ are the transmit powers of \glspl{ap} $i,j$ (to be optimized), capped at $P_{\max}$. All four channel gains $\{h_{i,i},h_{i,j},h_{j,i},h_{j,j}\}$ are assumed to be known thanks to \gls{mapc} signaling (an idealization that is currently partially supported). For the analysis in Section~\ref{sec:pf}, we adopt the Shannon rate $R_i(P_i,P_j)=\log_2(1+\textrm{SINR}_i(P_i,P_j))$ [b/s/Hz] as a tractable proxy for the achievable rate. 

\emph{MCS and decoding.} In practice, \glspl{ap}' transmission rate depends on a discrete \gls{mcs} $m_i=\mu(P_{i,i}^{\text{rx}})$ that depends on the power received (11bn supports up to 20 different \gls{mcs} types). We assume that \gls{mcs} is selected independently of the interference from \gls{bss} $j$. Regarding decoding, a transmission in \gls{bss} $i$ is successful if $\textrm{SINR}_i(P_i, P_j) \ge \gamma_{\text{CE}}$, where $\gamma_{\text{CE}}$ is the (linear) \gls{ce} target used by the receiver's decoder.


\section{Power Allocation Strategies for \gls{cosr}}
\label{sec:pf}

We consider four power allocation strategies for \gls{cosr} (see Algorithm~\ref{alg:cosr}): \colorbox{staticclr}{static} ($P_i=P_j=P_{\max}$), \colorbox{donorclr}{selfish} (included as a low complexity baseline, Section~\ref{ssec:donor}), \colorbox{contclr}{continuous} (Section~\ref{ssec:linesearch}), and \colorbox{discclr}{discrete} (Section~\ref{ssec:discrete}). The three latter strategies build on the formulation of the \gls{pf}~\cite{kelly1998rate}:
\begin{equation}
\begin{aligned} \max_{P_i, P_j} \quad & \text{PF}(P_i, P_j) = \ln R_i + \ln R_j \\ \text{subject to} \quad & 0 < \{P_i, P_j\} \le P_{\max} . \end{aligned}
\end{equation}

We index the two boundary edges by $k\in\{1,2\}$ (edge 1 fixes $P_i{=}P_{\max}$, free $P_j$; edge 2 fixes $P_j{=}P_{\max}$, free $P_i$), and write $\text{PF}(P)=\ln R_{\text{fixed}}+\ln R_{\text{free}}(P)$ for the \gls{pf} objective as a function of the free \gls{ap}'s power on that edge.

\begin{algorithm}[h]
\caption{Per-TXOP \gls{cosr} power allocation, unifying \colorbox{staticclr}{static}, \colorbox{donorclr}{selfish}, \colorbox{contclr}{continuous}, and \colorbox{discclr}{discrete} strategies.}
\label{alg:cosr}
\begin{algorithmic}[1]
\Require $h_{ii}, h_{ij}, h_{ji}, h_{jj}$, $P_{\max}$, $\gamma_{\text{CE}}$
\If{\colorbox{staticclr}{static}} \Return $(P_{\max}, P_{\max})$ \EndIf
\State $\mathcal{E} \gets \{1\}$ if \colorbox{donorclr}{selfish}, else $\{1,2\}$
\For{$k \in \mathcal{E}$} \State Compute $P_k^{\text{floor}},P_k^{\text{ceil}}$ using Eq.~\eqref{eq:floorceil}
  \If{$P_k^{\text{floor}} \le P_k^{\text{ceil}}$}
    \State $P^\star_k, \text{PF}_k \gets$ \Call{OptimizeEdge}{$P_k^{\text{floor}}$, $P_k^{\text{ceil}}$}
    \State $(P_i,P_j)^{(k)} \gets (P_{\max},P^\star_k)$ if $k{=}1$, else $(P^\star_k,P_{\max})$
  \EndIf
\EndFor
\If{no feasible $k \in \mathcal{E}$} \Return \textsc{Co-TDMA} \EndIf
\State \Return $(P_i,P_j)^{(k^\star)}$, $k^\star = \arg\max_k \text{PF}_k$
\Statex
\Function{OptimizeEdge}{$P_k^{\text{floor}}$, $P_k^{\text{ceil}}$}
  \If{\colorbox{contclr}{continuous}}
    \For{$n=1,\dots,N$} \Comment{\colorbox{contclr}{ternary search}}
      \State \colorbox{contclr}{$\Delta \gets (P_k^{\text{ceil}}{-}P_k^{\text{floor}})/3$}
      \State \colorbox{contclr}{$t_1,t_2 \gets P_k^{\text{floor}}{+}\Delta,\ P_k^{\text{ceil}}{-}\Delta$}
      \State \colorbox{contclr}{$P_k^{\text{floor}}\gets t_1$ if $\text{PF}(t_1){<}\text{PF}(t_2)$, else $P_k^{\text{ceil}}\gets t_2$}
    \EndFor
    \State \colorbox{contclr}{$P^\star \gets (P_k^{\text{floor}}+P_k^{\text{ceil}})/2$; \Return $P^\star, \text{PF}(P^\star)$}
  \ElsIf {\colorbox{discclr}{discrete} or \colorbox{donorclr}{selfish}}
    \State $P^\star \gets \min(P_{\max}, P_k^{\text{ceil}})$; \Return $P^\star, \text{PF}(P^\star)$
  \EndIf
\EndFunction
\end{algorithmic}
\end{algorithm}

\subsection{Preliminaries}
\label{ssec:prelim}

Before delving into the considered transmit power allocation methods, we first look at relevant aspects for deriving them. First, finding the optimal allocation of $(P_i,P_j)$ requires a 2D search $(0,P_{\max}]\times(0,P_{\max}]$, which can be computationally expensive between \glspl{txop} (around 5~ms). For that reason, we apply the Boundary Theorem to reduce the search to 1D~\cite{gjendemsjo2008binary}.

\begin{theorem}
\label{thm:boundary}
At any Pareto-optimal $(P_i, P_j)$, $\max(P_i, P_j) = P_{\max}$.
\end{theorem}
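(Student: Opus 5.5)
We argue by contradiction using a uniform scaling argument. Suppose $(P_i,P_j)$ is Pareto-optimal in the sense that no other feasible pair improves one rate $R_i$ or $R_j$ without decreasing the other, yet $\max(P_i,P_j)<P_{\max}$. Define the scaling factor $\alpha = P_{\max}/\max(P_i,P_j) > 1$ and the scaled pair $(P_i',P_j')=(\alpha P_i,\alpha P_j)$. By construction $0<P_i',P_j'\le P_{\max}$ and $\max(P_i',P_j')=P_{\max}$, so the scaled pair is feasible.

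The key step is to check that scaling strictly increases both SINRs given in Eq.~\eqref{eq:sinr}. For link $i$,
\begin{equation*}
\textrm{SINR}_i(\alpha P_i,\alpha P_j)=\frac{h_{i,i}\alpha P_i}{N_0+h_{j,i}\alpha P_j}=\frac{h_{i,i}P_i}{N_0/\alpha+h_{j,i}P_j}>\textrm{SINR}_i(P_i,P_j),
\end{equation*}
since $N_0>0$, $\alpha>1$, and $h_{i,i}P_i>0$. The same computation with indices swapped gives the strict increase for link $j$. Because $R_k=\log_2(1+\textrm{SINR}_k)$ is strictly increasing, both rates strictly increase. The pair $(P_i,P_j)$ is therefore strictly dominated, which contradicts Pareto optimality. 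Hence $\max(P_i,P_j)=P_{\max}$.

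The PF optimum is itself Pareto-optimal, because $\ln R_i+\ln R_j$ is strictly increasing in each rate. The same argument thus shows that the PF maximizer lies on one of the two edges $k\in\{1,2\}$, which justifies restricting Algorithm~\ref{alg:cosr} to those edges.

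The argument is essentially routine. The only points needing care are that the noise floor is strictly positive, which is what makes the improvement strict, and that the powers are strictly positive, which the feasible set $(0,P_{\max}]$ guarantees. For the discrete-MCS rate model, the rate $\mu(\cdot)$ is only nondecreasing. There scaling yields weak dominance rather than strict dominance, so the statement should be read as saying that some optimal pair attains $\max(P_i,P_j)=P_{\max}$.
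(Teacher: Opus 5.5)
Your proof is correct and follows the same argument as the paper: scale both powers by a common factor at fixed ratio, rewrite $\mathrm{SINR}_i$ with $N_0/\lambda$ in the denominator, and conclude that pushing the larger power up to $P_{\max}$ dominates any interior point. Your version is slightly tighter than the paper's, which only asserts that the rates are non-decreasing. You use strict monotonicity of the Shannon rate to get strict dominance, which is what the ``any Pareto-optimal'' wording actually requires, and your caveat is correct that a merely nondecreasing MCS rate supports only the weaker ``some optimum lies on the boundary'' reading.
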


\begin{proof}
Fix the ratio $\rho = P_i/P_j$ and scale both powers by a common factor $\lambda>0$ ($P_i(\lambda)=\rho \lambda P_j$, $P_j(\lambda)=\lambda P_j$). Then
\begin{equation}
\mathrm{SINR}_i(\lambda) = \frac{h_{ii}\rho P_j}{N_0/\lambda + h_{ji} P_j},
\label{eq:sinr_proof}
\end{equation}
which is strictly increasing in $\lambda$ for any fixed $\rho > 0$. Therefore, both $R_i$ and $R_j$ are non-decreasing in $\lambda$ at fixed $\rho$, so scaling $\lambda$ up to the feasibility boundary (when $\max(P_i, P_j)$ reaches $P_{\max}$) Pareto-dominates any interior point. So, the true optimum is guaranteed to be on either $P_i=P_{\max}$ or $P_j=P_{\max}$, as Eq.~\ref{eq:sinr_proof} is symmetric for $\mathrm{SINR}_j(\lambda)$. \end{proof}

Adopting Theorem~\ref{thm:boundary} reduces the 2D search to two 1D searches with no loss of optimality, and holds unconditionally for any rate function not decreasing in own \gls{sinr} (not only the Shannon rate)~\cite{gjendemsjo2008binary}.

Another consideration for the proposed methods lies in the decodability capabilities of receivers in a \gls{cosr} transmission. In particular, optimizing \gls{pf} alone might lead to failures, as the \gls{sinr} of a given receiver might fall below the \gls{ce} $\gamma_\text{CE}$. As a result, for $P_i=P_{\max}$ (interchangeably for $P_j=P_{\max}$), $P_j$ must satisfy

\begin{equation}
\label{eq:floorceil}
\underbrace{\frac{\gamma_{\text{CE}}(N_0+h_{ij}P_{\max})}{h_{jj}}}_{P_j^{\text{floor}}\;(j\text{ decodes})}
\le P_j \le
\underbrace{\frac{h_{ii}P_{\max}/\gamma_{\text{CE}} - N_0}{h_{ji}}}_{P_j^{\text{ceil}}\;(i\text{ still decodes})}.
\end{equation}

With this, we restrict the search for the transmit power of one \gls{ap} to this interval. Moreover, if no $P_j$ satisfies the conditions in Eq.~\ref{eq:floorceil}, meaning the \gls{ce} condition does not hold for at least one \gls{bss} (leading to packet losses), we fall back to \gls{cotdma} with equal time split among the two \glspl{ap}.

\subsection{Method 1: Selfish}
\label{ssec:donor}

A cheaper alternative to the strategies that follow assumes that the coordinating \gls{ap} ($i$, the winner of the \gls{txop}) shares its \gls{txop} by using the maximum transmit power itself and limiting the transmit power of the coordinated \gls{ap} ($j$) based on the maximum interference \gls{sta} ($\mathrm{STA}_i$) can tolerate. Setting $P_i=P_{\max}$ in Eq.~\eqref{eq:sinr} and requiring $\textrm{SINR}_i\ge\gamma_{\text{CE}}$ gives the maximum interference budget
\begin{equation}
\label{eq:imax}
I_i^{\max} = \frac{h_{ii}P_{\max}}{\gamma_{\text{CE}}} - N_0,
\end{equation}
which leads to
\begin{equation}
\label{eq:donor}
P_j = \min\!\Big(P_{\max},\ \frac{I_i^{\max}}{h_{ji}}\Big) = \min(P_{\max}, P_j^{\text{ceil}}) .
\end{equation}

This approach is computationally cheap but at the cost of optimality, since it does not check whether letting the coordinated \gls{ap} keep $P_{\max}$ instead would be better. In addition, it has been typically assumed as the \textit{de facto} solution for several reasons, including the fact that an \gls{ap} that wins a \gls{txop} would not sacrifice its performance by reducing its power to benefit other (perhaps unknown) \glspl{ap}~\cite{wilhelmi2023throughput}. 

\subsection{Method 2: Continuous Search}
\label{ssec:linesearch}

Theorem~\ref{thm:boundary} reduces the two-dimensional problem to two one-dimensional edge problems: $(P_{\max}, P_j)$ and $(P_i, P_{\max})$, where $P_j$ and $P_i$ are independently optimized, respectively, over the \gls{ce}-feasible range shown in Sec.~\ref{ssec:prelim}. On either edge, the rate of the free \gls{ap} (the one whose transmit power is to be optimized) strictly increases with its power, while the fixed \gls{ap}'s rate strictly decreases through the added interference. Consistent with this, we verified numerically (Sec.~\ref{ssec:analytical}) that the resulting \gls{pf} objective ($\ln R_{\text{fixed}}+\ln R_{\text{free}}$) is unimodal in the free power. To derive the \gls{cosr} transmit power through the continuous line search mechanism, we apply a ternary search, costing only $2N$ evaluations per edge. In every iteration $n=1,\dots,N$, the current interval of candidate power values for the free \gls{ap} (initialized to $[P_k^{\text{floor}}, P_k^{\text{ceil}}]$) is evaluated at the two points splitting it into thirds. The portion that cannot contain the maximum is discarded. The policy evaluates both edges this way and returns the one that yields higher \gls{pf}.

\subsection{Method 3: Discrete Rate}
\label{ssec:discrete}

The continuous line search optimizes the analytical expression $R_i=\log_2(1+\mathrm{SINR}_i)$, but the real system's rate is based on the \gls{mcs} ($m_i=\mu(P_{i,i}^{\text{rx}})$), which is computed from the received power. Once both links fulfill their \gls{ce} condition, link $i$'s achieved rate depends only on $P_i$. In consequence, since the rate function is non-decreasing in own received power and independent of the peer's power once the \gls{ce} constraint from Eq.~\eqref{eq:floorceil} is satisfied, the edge optimum is

\begin{equation}
\label{eq:closedform}
P_{k}^\star = \min\!\big(P_{\max},\, P_{k}^{\text{ceil}}\big).
\end{equation}

This means that the \gls{ap}'s power to be optimized should be pushed to the top of its \gls{ce}-feasible range. As a result, no search or oracle validation is needed for the discrete-rate method (since the achieved rate can only increase with power, the top of the range is the best choice). In addition, this method performs edge selection by comparing the two candidates' \gls{pf} objective on a real \gls{mcs}, instead of $\log_2(1+\mathrm{SINR})$.


\section{Evaluation}
\label{sec:eval}

We validate that Theorem~\ref{thm:boundary}'s search is lossless and evaluate the different mechanisms through simulations using \texttt{Kom8ndor}~\cite{wilhelmi2026kom8ndor}. Table~\ref{tab:params} shows the main simulation parameters.

\begin{table}[t!]
\centering
\caption{Main simulation parameters.}
\label{tab:params}
\resizebox{\columnwidth}{!}{%
\begin{tabular}{ll}
\toprule
\textbf{Parameter} & \textbf{Value} \\
\midrule
Frequency band ($f_c$) & 5\,GHz \\
Channel width ($B$) & 20\,MHz \\
Path loss model ($PL$) & IEEE TGax Scenario 1~\cite{ieee80211_tgax_channel} \\
Tx power ceiling ($P_{\max}$) & 20\,dBm \\
Noise floor ($N_0$) & $-95$\,dBm \\
Tx/Rx antenna gains ($G_{tx}/G_{rx}$) & 0\,dB (omnidirectional) \\
CCA thresh. ($CCA$) & $-82$\,dBm \\
CE ($\gamma_{\text{CE}}$) & 10\,dB \\
MCS values ($m$) & BPSK 1/2 to 4096-QAM 5/6 \\
Traffic model ($\zeta$) & Full buffer (downlink) \\
Max. A-MPDU aggregation ($N_{agg}$) & 64 \\
Max. TXOP duration ($T_\text{TXOP}$) & 5.484\,ms \\
Ternary search iterations ($N$) & 40 \\
Simulation time ($T_\text{sim}$) & 100\,s \\
\bottomrule
\end{tabular}
}
\end{table}

\subsection{Analytical validation}
\label{ssec:analytical}

We first verify that the two-edge search (Theorem~\ref{thm:boundary}) is lossless for the continuous mechanism from Algorithm~\ref{alg:cosr}. For that, we consider a simple deployment with two \glspl{ap} ($\mathrm{AP}_i$, coordinating, and $\mathrm{AP}_j$, coordinated), each serving one \gls{sta} located 3\,m and 1\,m, respectively, from its \gls{ap}. We evaluate the continuous search method and compare it against a brute-force search (oracle) and the static policy over different inter-\gls{ap} distances, $D\in\{2,3,4,5,6\}$\,m (at $D=1$\,m both edges are empty, so the policy falls back to \gls{cotdma}). The results are reported in Table~\ref{tab:analytical}, including the power and the rate of each \gls{bss} and the resulting \gls{pf}.

\begin{table}[h!]
\centering
\caption{Analytical validation of the continuous search method.}
\label{tab:analytical}
\resizebox{\columnwidth}{!}{%
\begin{tabular}{ccccc}
\toprule
$D\,[m]$ & Policy & $(P_i,P_j)$ [dBm] & $(R_i,R_j)$ [b/s/Hz] & $\text{PF}$ \\
\midrule
\multirow{3}{*}{2} & \cellcolor{staticclr}static     & \cellcolor{staticclr}(20.0, 20.0) & \cellcolor{staticclr}(2.20, 5.47) & \cellcolor{staticclr}2.489 \\
 & \cellcolor{contclr}continuous & \cellcolor{contclr}(20.0, 14.6) & \cellcolor{contclr}(3.75, 3.76) & \cellcolor{contclr}2.646 \\
 & \textit{oracle} & \textit{(20.0, 14.6)} & \textit{(3.75, 3.76)} & \textit{2.646} \\
\midrule
\multirow{3}{*}{3} & \cellcolor{staticclr}static     & \cellcolor{staticclr}(20.0, 20.0) & \cellcolor{staticclr}(3.70, 8.63) & \cellcolor{staticclr}3.464 \\
 & \cellcolor{contclr}continuous & \cellcolor{contclr}(20.0, 12.5) & \cellcolor{contclr}(6.10, 6.15) & \cellcolor{contclr}3.624 \\
 & \textit{oracle} & \textit{(20.0, 12.5)} & \textit{(6.09, 6.15)} & \textit{3.623} \\
\midrule
\multirow{3}{*}{4} & \cellcolor{staticclr}static     & \cellcolor{staticclr}(20.0, 20.0) & \cellcolor{staticclr}(5.45, 11.38) & \cellcolor{staticclr}4.129 \\
 & \cellcolor{contclr}continuous & \cellcolor{contclr}(20.0, 11.4) & \cellcolor{contclr}(8.23, 8.54) & \cellcolor{contclr}4.252 \\
 & \textit{oracle} & \textit{(20.0, 11.5)} & \textit{(8.22, 8.55)} & \textit{4.252} \\
\midrule
\multirow{3}{*}{5} & \cellcolor{staticclr}static     & \cellcolor{staticclr}(20.0, 20.0) & \cellcolor{staticclr}(7.53, 13.79) & \cellcolor{staticclr}4.643 \\
 & \cellcolor{contclr}continuous & \cellcolor{contclr}(20.0, 12.3) & \cellcolor{contclr}(9.93, 11.24) & \cellcolor{contclr}4.715 \\
 & \textit{oracle} & \textit{(20.0, 12.3)} & \textit{(9.93, 11.24)} & \textit{4.715} \\
\midrule
\multirow{3}{*}{6} & \cellcolor{staticclr}static     & \cellcolor{staticclr}(20.0, 20.0) & \cellcolor{staticclr}(9.41, 16.20) & \cellcolor{staticclr}5.027 \\
 & \cellcolor{contclr}continuous & \cellcolor{contclr}(20.0, 14.5) & \cellcolor{contclr}(10.96, 14.38) & \cellcolor{contclr}5.060 \\
 & \textit{oracle} & \textit{(20.0, 14.5)} & \textit{(10.96, 14.38)} & \textit{5.060} \\
\bottomrule
\end{tabular}
}
\end{table}

As shown, the continuous search mechanism matches the oracle (within $10^{-3}$) at roughly $1/1000$ of the oracle's computational cost.\footnote{For the oracle, we considered a $400\times400=160{,}000$-point grid, while the continuous search runs $2\times2N=160$ for $N{=}40$.} Apart from that, we observe that the continuous search improves the static policy by 0.03--0.16 throughout the different distances in terms of \gls{pf}. Remarkably, this is achieved by uplifting the rate of \gls{ap}$_i$, which is disadvantaged relative to \gls{ap}$_j$ in this scenario.

\subsection{Simulation Results}
\label{ss:simulation_results}

We simulate the same deployment from Section~\ref{ssec:analytical} using \texttt{Kom8ndor} and evaluate each considered transmit power selection policy (\colorbox{staticclr}{static}, \colorbox{donorclr}{selfish}, \colorbox{contclr}{continuous}, and \colorbox{discclr}{discrete}), to which we add a non-\gls{mapc} baseline (\emph{no-mapc}). The results for $D\in \{1,2,3,4,5,6\}$\,m are provided in Table~\ref{tab:simulation_results_1}.

\begin{table}[h!]
\centering
\caption{Simulation results.}
\label{tab:simulation_results_1}
\resizebox{\columnwidth}{!}{%
\begin{tabular}{ccccc}
\toprule
$D\,[m]$ & Policy & $(P_i,P_j)$ [dBm] & $(\text{Thr}_i,\text{Thr}_j)$ [Mbps] & $\text{PF}$ \\
\midrule
\multirow{5}{*}{1} & \textit{no-mapc}   & \textit{(20.0, 20.0)} & \textit{(44.49, 66.07)} & \textit{7.986} \\
 & \cellcolor{staticclr}static     & \cellcolor{staticclr}(20.0, 20.0) & \cellcolor{staticclr}(0.00, 0.00) & \cellcolor{staticclr}$-\infty$ \\
 & \cellcolor{donorclr}selfish     & \cellcolor{donorclr}(Co-TDMA) & \cellcolor{donorclr}(38.71, 59.50) & \cellcolor{donorclr}7.742 \\
 & \cellcolor{contclr}continuous   & \cellcolor{contclr}(Co-TDMA) & \cellcolor{contclr}(38.71, 59.50) & \cellcolor{contclr}7.742 \\
 & \cellcolor{discclr}discrete     & \cellcolor{discclr}(Co-TDMA) & \cellcolor{discclr}(38.71, 59.50) & \cellcolor{discclr}7.742 \\
\midrule
\multirow{5}{*}{2} & \textit{no-mapc}   & \textit{(20.0, 20.0)} & \textit{(41.78, 70.41)} & \textit{7.987} \\
 & \cellcolor{staticclr}static     & \cellcolor{staticclr}(20.0, 20.0) & \cellcolor{staticclr}(0.00, 128.26) & \cellcolor{staticclr}$-\infty$ \\
 & \cellcolor{donorclr}selfish     & \cellcolor{donorclr}(20.0, 15.06/20.00) & \cellcolor{donorclr}(59.56, 124.86) & \cellcolor{donorclr}8.914 \\
 & \cellcolor{contclr}continuous   & \cellcolor{contclr}(20.0, 14.61) & \cellcolor{contclr}(85.04, 127.56) & \cellcolor{contclr}9.292 \\
 & \cellcolor{discclr}discrete     & \cellcolor{discclr}(20.0, 15.06) & \cellcolor{discclr}(85.04, 127.56) & \cellcolor{discclr}9.292 \\
\midrule
\multirow{5}{*}{3} & \textit{no-mapc}   & \textit{(20.0, 20.0)} & \textit{(41.82, 70.34)} & \textit{7.987} \\
 & \cellcolor{staticclr}static     & \cellcolor{staticclr}(20.0, 20.0) & \cellcolor{staticclr}(85.04, 127.56) & \cellcolor{staticclr}9.292 \\
 & \cellcolor{donorclr}selfish     & \cellcolor{donorclr}(20.0, 20.0) & \cellcolor{donorclr}(85.04, 127.56) & \cellcolor{donorclr}9.292 \\
 & \cellcolor{contclr}continuous   & \cellcolor{contclr}(20.0, 12.48) & \cellcolor{contclr}(85.04, 127.56) & \cellcolor{contclr}9.292 \\
 & \cellcolor{discclr}discrete     & \cellcolor{discclr}(20.0, 20.0) & \cellcolor{discclr}(85.04, 127.56) & \cellcolor{discclr}9.292 \\
\midrule
\multirow{5}{*}{4} & \textit{no-mapc}   & \textit{(20.0, 20.0)} & \textit{(41.82, 70.34)} & \textit{7.987} \\
 & \cellcolor{staticclr}static     & \cellcolor{staticclr}(20.0, 20.0) & \cellcolor{staticclr}(85.04, 127.56) & \cellcolor{staticclr}9.292 \\
 & \cellcolor{donorclr}selfish     & \cellcolor{donorclr}(20.0, 20.0) & \cellcolor{donorclr}(85.04, 127.56) & \cellcolor{donorclr}9.292 \\
 & \cellcolor{contclr}continuous   & \cellcolor{contclr}(20.0, 11.43) & \cellcolor{contclr}(85.04, 127.56) & \cellcolor{contclr}9.292 \\
 & \cellcolor{discclr}discrete     & \cellcolor{discclr}(20.0, 20.0) & \cellcolor{discclr}(85.04, 127.56) & \cellcolor{discclr}9.292 \\
\midrule
\multirow{5}{*}{5} & \textit{no-mapc}   & \textit{(20.0, 20.0)} & \textit{(47.12, 69.87)} & \textit{8.099} \\
 & \cellcolor{staticclr}static     & \cellcolor{staticclr}(20.0, 20.0) & \cellcolor{staticclr}(85.04, 127.56) & \cellcolor{staticclr}9.292 \\
 & \cellcolor{donorclr}selfish     & \cellcolor{donorclr}(20.0, 20.0) & \cellcolor{donorclr}(85.04, 127.56) & \cellcolor{donorclr}9.292 \\
 & \cellcolor{contclr}continuous   & \cellcolor{contclr}(20.0, 12.31) & \cellcolor{contclr}(85.04, 127.56) & \cellcolor{contclr}9.292 \\
 & \cellcolor{discclr}discrete     & \cellcolor{discclr}(20.0, 20.0) & \cellcolor{discclr}(85.04, 127.56) & \cellcolor{discclr}9.292 \\
\midrule
\multirow{5}{*}{6} & \textit{no-mapc}   & \textit{(20.0, 20.0)} & \textit{(46.99, 70.22)} & \textit{8.102} \\
 & \cellcolor{staticclr}static     & \cellcolor{staticclr}(20.0, 20.0) & \cellcolor{staticclr}(85.11, 127.66) & \cellcolor{staticclr}9.293 \\
 & \cellcolor{donorclr}selfish     & \cellcolor{donorclr}(20.0, 20.0) & \cellcolor{donorclr}(85.11, 127.66) & \cellcolor{donorclr}9.293 \\
 & \cellcolor{contclr}continuous   & \cellcolor{contclr}(20.0, 14.53) & \cellcolor{contclr}(81.51, 122.26) & \cellcolor{contclr}9.207 \\
 & \cellcolor{discclr}discrete     & \cellcolor{discclr}(20.0, 20.0) & \cellcolor{discclr}(85.11, 127.66) & \cellcolor{discclr}9.293 \\
\bottomrule
\end{tabular}%
}
\end{table}

For $D=1$\,m, \emph{no-mapc} outperforms every \gls{cosr} variant, as coordination signaling from the adaptive policies (which fall back to \gls{cotdma}) reduces efficiency. The \emph{static} policy, which has no feasibility check and forces simultaneous transmissions at 20~dBm, leads to zero performance due to the high interference regime it creates. At $D=2$\,m, all the adaptive policies start to adapt the power to enable concurrent transmissions. At this distance, the \emph{selfish} policy falls short due to its single-edge restriction and only performs optimally during the \glspl{txop} won by \gls{ap}$_i$, when \gls{ap}$_j$'s power is limited to 15.06\,dBm (optimal allocation requires the weakest link to receive maximum power). The rest of the time, when the stronger link (\gls{ap}$_j$-\gls{sta}$_j$) wins and the maximum transmit power is allocated to it, simultaneous transmission is unfeasible, and the method falls back to \gls{cotdma}. From $D=3$\,m onward, when space can be properly reused due to the geometry of the scenario, all adaptive policies except \emph{continuous} converge to the highest rates, i.e., $(85.04, 127.56)$\,Mbps, by selecting the maximum power at both \glspl{ap}. \emph{Continuous} matches the same \gls{pf} as the other policies while using considerably less power at \gls{ap}$_j$. However, at $D=6$\,m, it slightly underperforms both in throughput and \gls{pf}, which results from the gap between theoretical formulation and Komondor's discretized \gls{mcs} implementation.


\section{Conclusions}
\label{sec:conclusions}

In this letter, we studied the optimization of the transmit power in 11bn \gls{cosr} transmissions. We advocated for fairness (specifically, \gls{pf}) as a maximization goal, as we believe that unlicensed wireless networks should be altruistic by nature, but other optimization objectives (e.g., sum throughput) can be considered. Our analytical and simulation results revealed the following: 1) that the two-edge search is lossless, allowing for reducing costly grid searches to one-dimensional searches and unlocking cheap (within \gls{txop} time) computation, 2) that, to find the optimal allocation (where one \gls{ap} uses $P_{\max}$ but the other \gls{ap}'s power rarely uses the maximum) both links must be evaluated, and 3) that the existing 11bn \gls{mapc} signaling, which includes the allowed coordinated \gls{ap} power (\texttt{TX Power Limit} field), is sufficient for the selfish mechanism, but includes no per-\gls{txop} report of the coordinated \gls{ap}'s own link quality, which the double-edge approaches require and which \gls{cosr} signaling does not currently define.


\section*{Acknowledgements}

This work was supported by the following projects: TRUE Wi-Fi PID2024-155470NB-I00 (MICIU/AEI/10,13039/501100011033/FEDER,UE), ICREA Academia 2024 (00077 AGAUR), and MdM CEX2021-001195-M (MICIU/AEI/10.13039/501100011033).

\bibliographystyle{IEEEtran}
\bibliography{references}

\end{document}